 \DeclareMathOperator{\tr}{tr}
\newsavebox{\mybox}
\newtheorem{lemma}{Lemma}
\newtheorem{theorem}{Theorem}
\begin{document}
\title{On the Capacity of the Half-Duplex MIMO Gaussian Diamond Channel}

\author{\IEEEauthorblockN{Antony V. Mampilly and Srikrishna Bhashyam}
\IEEEauthorblockA{Department of Electrical Engineering\\
Indian Institute of Technology Madras\\
Chennai 600036, India.}

}


\maketitle

\begin{abstract}
In this paper, we analyze the 2-relay multiple-input multiple-output (MIMO) Gaussian diamond channel. We show that a multihopping decode-and-forward with multiple access (MDF-MAC)  protocol achieves rates within a constant gap from capacity when a channel parameter $\Delta$ is greater than zero. We also identify the transmit covariance matrices to be used by each relay in the multiple-access (MAC) state of the MDF-MAC protocol. As done for the single-antenna 2-relay Gaussian diamond channel, the channel parameter $\Delta$ is defined to be the difference between the product of the capacities of the links from the source to the two relays and the product of the capacities of the links from the two relays to the destination. 
\end{abstract}

\section{Introduction}
The relay channel was introduced in \cite{van71,van71-2} and studied in \cite{CovElG79}. Although the relay channel has been studied extensively, the exact capacity of the channel is still unknown. The approximate capacity of a single-antenna Gaussian relay channel to within one bit was found in \cite{ChaChuLee10}. The multi-input multi-output (MIMO) Gaussian relay channel was studied in \cite{WanZanHos05} and the approximate capacity of the MIMO Gaussian relay channel to within a finite number of bits was recently found in \cite{JinKim14,JinKim15}.

The parallel relay channel or diamond channel was introduced in \cite{SchGal00}. This channel consists of a source, a destination and $N$ relays. The relays cannot communicate with each other and the source cannot directly communicate with the destination. The full-duplex (FD) single-antenna Gaussian $N$-relay diamond channel was studied in \cite{SenWanFra12,CheOzg14} and the capacity obtained to within $O(\log N)$ bits. Noisy network coding \cite{LimKimElGChu11}, which is applicable to more general relay network topologies, achieves a gap of $O(N)$ bits for the Gaussian $N$-relay diamond channel. The half-duplex (HD) single-antenna Gaussian $N$-relay diamond channel has been studied in \cite{BraOzgFra12,BraFra14,CarTunKnoSal14,BagMotKha14}. The HD $N$-relay diamond channel can be in $2^N$ relaying states since each relay can be in transmit or receive state at any time. In \cite{BraOzgFra12,BraFra14}, it was proved, for $N \le 6$, that the optimal relaying protocol has at most $N + 1$ states, and the same was conjectured for general $N$. In \cite{CarTunKnoSal14}, noisy network coding was shown to achieve rates within $1.96(N + 2)$ bits of the cut-set upper bound. In \cite{BagMotKha14}, simple multi-hopping decode-and-forward (MDF) protocols are proposed and shown to achieve capacity within 0.71 bits for the case of $N = 2$ with fixed scheduling and constant power constraints across all relaying states. An important parameter of the 2-relay diamond channel $\Delta$ was also introduced in \cite{BagMotKha14}. Intuitively, $\Delta$ is a measure comparing the capacities of the links in the first hop with the links in the second hop. All the possible channel conditions, namely $\Delta = 0$, $\Delta > 0$ and $\Delta < 0$, were analyzed.  For $\Delta = 0$, the MDF protocol achieved exact capacity. For $\Delta > 0$ and $\Delta < 0$, MDF-MAC and MDF-BC protocols were proposed and shown to achieve rates within 0.71 bits of capacity. 

The HD MIMO parallel relay channel or HD MIMO diamond channel has been recently studied in \cite{CarTunKno15}. In \cite{CarTunKno15}, (1) noisy network coding was shown to achieve rates within  $1.96(N + 2)$ bits {\em per antenna} of the cut-set upper bound for the HD MIMO diamond channel, and (2) it was also shown that this finite gap can be achieved using at most $N + 1$ relaying states. 
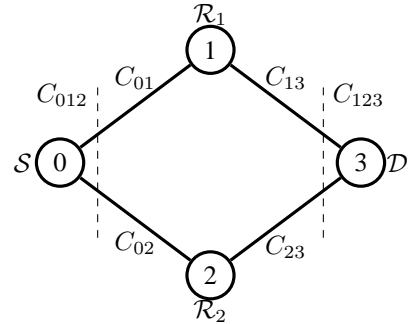
\begin{figure}[ht]
\begin{center}
\begin{tikzpicture}[scale=1]
\tikzset{vertex/.style = {shape=circle,very thick,draw,minimum size=1.1em}}
\tikzset{edge/.style = {> = latex', very thick}}

\node[vertex] (0) at (0,0) {0};
\node[vertex] (1) at (2,1.5) {1};
\node[vertex] (2) at (2,-1.5) {2};
\node[vertex] (3) at (4,0) {3};

\draw [edge] (0) to (1);
\draw [edge] (0) to (2);
\draw [edge] (1) to (3);
\draw [edge] (2) to (3);

\node[left] at (-0.25,0) {${\cal S}$ };
\node[right] at (4.22,0) {${\cal D}$};
\node[above] at (2,1.7) {${\cal R}_1$};
\node[below] at (2,-1.7) {${\cal R}_2$};

\node[above] at (1,0.85) {$C_{01}$};
\node[below] at (1,-0.8) {$C_{02}$};
\node[above] at (3,0.85) {$C_{13}$};
\node[below] at (3,-0.85) {$C_{23}$};
\node[right] at (3.5,0.9) {$C_{123}$};
\node[left] at (0.5,0.9) {$C_{012}$};

\draw[dashed] (0.5,1)-- (0.5,-1);
\draw[dashed] (3.5,1) -- (3.5,-1);
\end{tikzpicture}
\caption{2-relay MIMO Gaussian diamond channel: Each node has $n$ antennas, Parameter $\Delta = C_{01}C_{02} - C_{13}C_{23}$, $C_{012}$ and $C_{123}$ are the cut capacities of the respective cuts}
\label{MIMOdiamond}
\end{center}
\end{figure}

In this paper, we consider the 2-relay MIMO Gaussian diamond channel (see Fig. \ref{MIMOdiamond}), i.e., the multi-antenna generalization of the diamond channel considered in \cite{BagMotKha14}. We show that, for $\Delta > 0$, the multihopping decode-and-forward with multiple access (MDF-MAC) protocol achieves rates within a constant gap from capacity. In the process, we also identify the transmit covariance matrices to be used by each relay in the multiple-access (MAC) state. In the single-antenna case in \cite{BagMotKha14}, the achievable rate of the MDF-MAC protocol is the solution to a linear program since the rate constraints in the MAC state are three linear constraints. However, the rate region for the MAC state in the multi-antenna setting in this paper is the union of the rate regions for each choice of feasible transmit covariance matrices, i.e., an infinite union. We choose the transmit covariance matrices in the MAC state appropriately to restrict ourselves to a region described by three linear rate constraints {\em and} obtain a finite gap between the achievable rate and the capacity upper bound.  For $\Delta = 0$, the capacity has already been determined in \cite{BagMotKha14}. For $\Delta <0$, it is not yet known if the MDF-BC protocol can achieve rates within a constant gap from capacity as in the single-antenna case.   

\section{System Model and MDF-MAC protocol}
The 2-relay MIMO Gaussian diamond channel is shown in Fig. \ref{MIMOdiamond}.
For ease of exposition, we assume that all nodes have $n$ antennas. The received signals at relays ${\cal R}_1$, ${\cal R}_2$ and destination ${\cal D}$ are given by:
\begin{IEEEeqnarray}{rll}
\mathbf{y}_1 &=& {\mathbf H}_{01} \mathbf{x}_0 + \mathbf{z}_1,  \nonumber\\
\mathbf{y}_2 &=& {\mathbf H}_{02} \mathbf{x}_0 + \mathbf{z}_2,  \nonumber\\
\mathbf{y}\; &=& {\mathbf H}_{13} \mathbf{x}_1 + {\mathbf H}_{23} \mathbf{x}_2 + \mathbf{z}_3,
\end{IEEEeqnarray}
respectively, where $\mathbf{x}_0$, $\mathbf{x}_1$, and $\mathbf{x}_2$ are the transmit signals from ${\cal S}$, ${\cal R}_1$, and ${\cal R}_2$ respectively, ${\mathbf H}_{01}$, ${\mathbf H}_{02}$, ${\mathbf H}_{13}$, and ${\mathbf H}_{23}$ are the real $n \times n$ MIMO channel matrices corresponding to the ${\cal S}$-${\cal R}_1$, ${\cal S}$-${\cal R}_2$, ${\cal R}_1$-${\cal D}$, and ${\cal R}_2$-${\cal D}$ channels, and ${\mathbf{z}_1}$, ${\mathbf{z}_2}$, and ${\mathbf{z}_3}$ are the $n \times 1$ Gaussian noise vectors with distribution ${\cal N}(\mathbf{0}, \mathbf{I})$ at ${\cal R}_1$, ${\cal R}_2$, and ${\cal D}$, respectively.

\begin{figure}[ht]
\begin{center}
\begin{tikzpicture}[scale=0.55]
\tikzset{vertex/.style = {shape=circle,very thick,draw,minimum size=2mm}}
\tikzset{edge/.style = {->,> = latex', very thick}}

\node[vertex] (0) at (0,0) {0};
\node[vertex] (1) at (2,1.5) {1};
\node[vertex] (2) at (2,-1.5) {2};
\node[vertex] (3) at (4,0) {3};

\path[->,>=stealth,draw,thick]
 (0) edge node [above left] {${\mathbf H}_{01}$} (1)
 (2) edge node [below right] {${\mathbf H}_{23}$} (3);

\node[right] at (1,-2.5) {\textbf{State 1}};

\node[vertex] (0) at (6,0) {0};
\node[vertex] (1) at (8,1.5) {1};
\node[vertex] (2) at (8,-1.5) {2};
\node[vertex] (3) at (10,0) {3};

\path[->,>=stealth,draw,thick]
 (0) edge node [below left] {${\mathbf H}_{02}$} (2)
 (1) edge node [above right] {${\mathbf H}_{13}$} (3);

\node[right] at (7,-2.5) {\textbf{State 2}};


\node[vertex] (0) at (0,-6) {0};
\node[vertex] (1) at (2,-4.5) {1};
\node[vertex] (2) at (2,-7.5) {2};
\node[vertex] (3) at (4,-6) {3};

\path[->,>=stealth,draw,thick]
 (1) edge node [above right] {${\mathbf H}_{13}$} (3)
 (2) edge node [below right] {${\mathbf H}_{23}$} (3);

\node[right] at (1,-8.5) {\textbf{State 3}};


\node[vertex] (0) at (6,-6) {0};
\node[vertex] (1) at (8,-4.5) {1};
\node[vertex] (2) at (8,-7.5) {2};
\node[vertex] (3) at (10,-6) {3};

\path[->,>=stealth,draw,thick]
 (0) edge node [above left] {${\mathbf H}_{01}$} (1)
 (0) edge node [below left] {${\mathbf H}_{02}$} (2);

\node[right] at (7,-8.5) {\textbf{State 4}};

\draw[dashed] (5,-9)-- (5,3);
\draw[dashed] (-1,-3.2)-- (11,-3.2);

\end{tikzpicture}
\caption{States of the diamond channel}
\label{states}
\end{center}
\end{figure}
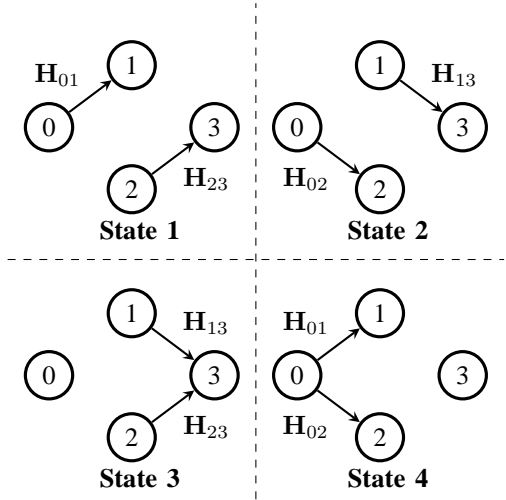
The nodes are half-duplex. The four possible relaying states are shown in Fig. \ref{states}.
As in \cite{BagMotKha14}, we assume constant power constraints for each node across all states. For nodes ${\cal S}$, ${\cal R}_1$, and ${\cal R}_2$, without loss of generality, the power constraints are taken to be 1, i.e., $P_0=P_1=P_2=1$. Let 
\[
{C}({\mathbf H}, P) = \max\limits_{{\mathbf Q} \succeq 0, \tr({\mathbf Q}) \leq P}\frac{1}{2}\log\mbox{det}({\mathbf I} + {\mathbf H} {\mathbf Q} {\mathbf H}^T).
\]
The channel parameters are defined as follows: $C_{01} = {C}({\mathbf H}_{01}, 1)$, $C_{02} = {C}({\mathbf H}_{02}, 1)$, $C_{13} = {C}({\mathbf H}_{13}, 1)$, $C_{23} = {C}({\mathbf H}_{23}, 1)$, $C_{012} = {C}({\mathbf H}_{012}, 1)$, and $C_{123} = {C}({\mathbf H}_{123}, 2)$,  where ${\mathbf H}_{012}^T=[ {\mathbf H}_{01}^T \; {\mathbf H}_{02}^T ]$ and ${\mathbf H}_{123}=[ {\mathbf H}_{13} \; {\mathbf H}_{23} ]$. The optimal covariance matrix ${\mathbf Q}$ corresponding to each of these capacities are denoted ${\mathbf K}_{01}$, ${\mathbf K}_{02}$, ${\mathbf K}_{13}$, ${\mathbf K}_{23}$, ${\mathbf K}_{012}$, and ${\mathbf K}_{123}$, respectively. For example, we have 
\[
C_{13} = \frac{1}{2}\log\mbox{det}({\mathbf I} + {\mathbf H}_{13} {\mathbf K}_{13} {\mathbf H}_{23}^T).
\]

\subsection{MDF-MAC protocol}
The MDF-MAC protocol is a multihopping decode-and-forward protocol using States 1, 2 and 3.  The total transmission time is normalized to 1 and States 1, 2, and 3 are used for $t_1$, $t_2$, and $t_3$ fractions of the total transmission time. Let $R_1$ and  $R_2$ be the rates of transmission from relays ${\cal R}_1$ and ${\cal R}_2$ to the destination in the multiple access state (State 3). Then, the maximum achievable rate $R_{\text{MAC}}$ from ${\cal S}$ to ${\cal D}$ of the MDF-MAC scheme is given by
\begin{align}
R_{\text{MAC}} = \max_{\sum_i t_i =1, t_i \ge 0} & \left\{ \min\{ t_1 C_{01}, t_2 C_{13} + R_1\} \right. \nonumber \\ 
&\left. + \min \{ t_2 C_{02}, t_1 C_{23} + R_2\} \right\}.
\label{Rmacmaxmin}
\end{align}

\section{Gap from Capacity of the MDF-MAC protocol for $\Delta > 0$}
In this section, we show that the MDF-MAC is within constant gap of capacity for $\Delta > 0$ for the 2-relay MIMO Gaussian diamond channel. In \cite{BagMotKha14}, the MDF-MAC was shown to be within 0.71 bits of capacity for the {\em single-antenna} Gaussian diamond channel. We will first summarize the steps in the analysis in \cite{BagMotKha14}, discuss the main difficulties in extending this result to the MIMO case, and then present our solution that overcomes these difficulties.

In \cite{BagMotKha14}, the gap result is obtained as follows.
\begin{itemize}
\item[Step 1.] A capacity upper bound is obtained by considering the dual of the linear program associated with the HD cut-set bound.
\item[Step 2.] A linear program is formulated for determining the achievable rate $R_{\text{MAC}}$ using the MDF-MAC protocol.
\item[Step 3.] An achievable $R_{\text{MAC}}$ is identified from the formulated linear program.
\item[Step 4.] The gap between the achievable rate and the upper bound is shown to be bounded if $C_{123} - C_{\text{MAC}}$ and $C_{123} - (C_{13} + C_{23})$ are both bounded by a finite constant, where $C_{\text{MAC}}$ is the sum rate in the MAC state of MDF-MAC.
\item[Step 5.] $C_{123} - C_{\text{MAC}}$ and $C_{123} - (C_{13} + C_{23})$ are shown to be bounded by finite constants.
\end{itemize}

The main difficulties in obtaining a similar gap result for the MIMO case are in Steps 2 and 5. In step 2, a linear program is easily obtained in the single-antenna case since the rate region for $(R_1, R_2)$ in the MAC state is a pentagon specified by a finite number of linear inequalities. In the MIMO setting, the rate region for $(R_1, R_2)$ is an infinite  union of such pentagons and cannot be exactly described by a finite number of linear inequalities. This capacity region for the MAC state is given by \cite{GolJafJinVis03}
\[
{\cal C}_{\text{MAC}} = \bigcup_{{\mathbf Q}_1, {\mathbf Q}_2,: {\mathbf Q}_i\succeq 0, \tr({\mathbf Q_i}) \leq 1} {\cal C}'_{\text{MAC}} ({\mathbf Q}_1, {\mathbf Q}_2),
\]
where ${\cal C}'_{\text{MAC}}({\mathbf Q}_1, {\mathbf Q}_2)$ is a pentagon obtained by choosing the covariance matrices at the relays ${\cal R}_1$ and ${\cal R}_2$ to be ${\mathbf Q}_1$ and ${\mathbf Q}_2$, respectively, i.e., ${\cal C}'_{\text{MAC}}({\mathbf Q}_1, {\mathbf Q}_2)$ is the set of all $(R_1, R_2)$ satisfying
\allowdisplaybreaks{\begin{align}
R_1 & \le \frac{1}{2}\log\mbox{det}({\mathbf I} + {\mathbf H}_{13} {\mathbf Q}_1 {\mathbf H}_{13}^T) \triangleq C_{13}^\prime \nonumber\\
R_2 & \le \frac{1}{2}\log\mbox{det}({\mathbf I} + {\mathbf H}_{23} {\mathbf Q}_2 {\mathbf H}_{23}^T) \triangleq C_{23}^\prime \nonumber\\
R_1 + R_2 & \le \frac{1}{2}\log\mbox{det}({\mathbf I} + {\mathbf H}_{13} {\mathbf Q}_1 {\mathbf H}_{13}^T + {\mathbf H}_{23} {\mathbf Q}_2 {\mathbf H}_{23}^T) \nonumber \\
& \triangleq C_{\text{MAC}}^\prime. 
\label{Cprimes}
\end{align}}
By fixing ${\mathbf Q}_1$ and ${\mathbf Q}_2$, we can get a linear program for $R_{\text{MAC}}$ even in the MIMO case. However, this should be done carefully since we need to be able to bound $C_{123} - C_{\text{MAC}}$ in Step 5. Thus, this choice affects Steps 3, 4, and 5. We study this problem and provide such an appropriate choice for ${\mathbf Q}_1$ and ${\mathbf Q}_2$.  

Our analysis is presented in detail in the following subsections.

\subsection{Upper bound (Step 1)}
The upper bound from \cite[Eqn. (37)]{BagMotKha14} is valid for our MIMO setting as well and is given by:
\small{
\begin{align}
R_{up}^1 & = \frac{C_{01}(C_{02} + C_{13})}{C_{01} + C_{13}} + \frac{-C_{02}\Delta}{(C_{123} - C_{13} + C_{02})(C_{01} + C_{13})} + \delta\\
R_{up}^2 &= \frac{C_{02}(C_{01} + C_{23})}{C_{02} + C_{23}} + \frac{-C_{01}\Delta}{(C_{123} - C_{23} + C_{01})(C_{02} + C_{23})} + \delta, 
\end{align}}
for $\Gamma' \le 0$ and $\Gamma' > 0$, respectively, where $\Gamma' = C_{02}[C_{123} - C_{23}] - C_{01}[C_{123} - C_{23}]$, and $\delta = \max(C_{123} - (C_{13} + C_{23}), 0)$.

\subsection{Formulating the linear program for $R_{\text{MAC}}$: Choice of ${\mathbf Q}_1$ and ${\mathbf Q}_2$ (Steps 2-4)}

For a given ${\mathbf Q}_1$ and ${\mathbf Q}_2$, we defined $C^\prime_{13}, C^\prime_{23}$ and $C^\prime_{\text{MAC}}$ in (\ref{Cprimes}). For this ${\mathbf Q}_1$ and ${\mathbf Q}_2$, we can formulate a linear program for $R_{\text{MAC}}$ as follows using (\ref{Rmacmaxmin}) and (\ref{Cprimes}).
 \begin{IEEEeqnarray}{lll}
\text{maximize} & \; & R_{\text{MAC}}
 \nonumber \\
\text{subject to}: & \; & R_{\text{MAC}} \leq t_1C_{01}+t_2C_{02}
 \nonumber \\
 & \; & R_{\text{MAC}} \leq t_2(C_{02}+C_{13}) + R_1
 \nonumber \\
 & \; & R_{\text{MAC}} \leq t_1(C_{01}+C_{23}) + R_2
 \nonumber \\
 & \; & R_{\text{MAC}} \leq t_1C_{23}+t_2C_{13} + R_1 + R_2
 \nonumber \\
 & \; & R_1 \leq t_3 C_{13}^\prime
 \nonumber \\
 & \; & R_2 \leq t_3 C_{23}^\prime
  \nonumber \\
 & \; & R_1 + R_2 \leq t_3 C_{\text{MAC}}^\prime
   \nonumber \\
 & \; & \sum_{i=1}^3 t_i=1,t_i \leq 0.
   \nonumber 
\end{IEEEeqnarray}
Using Fourier-Motzkin elimination to eliminate variables $R_1$ and $R_2$, the above optimization problem can be reduced to:
 \begin{IEEEeqnarray}{lll}
\text{maximize }& \; & R_{\text{MAC}}
 \nonumber \\
\text{subject to}: & \; & R_{\text{MAC}} \leq t_1C_{01}+t_2C_{02} \label{eq:op1}\\
 & \; & R_{\text{MAC}} \leq t_2(C_{02}+C_{13}) + t_3 C_{13}^\prime \label{eq:op2} \\
 & \; & R_{\text{MAC}} \leq t_1(C_{01}+C_{23}) + t_3 C_{23}^\prime \label{eq:op3}\\
 & \; & R_{\text{MAC}} \leq t_1C_{23}+t_2C_{13} + t_3 C_{\text{MAC}}^\prime \label{eq:op4}\\
 & \; & \sum_{i=1}^3 t_i=1,t_i \geq 0.
\end{IEEEeqnarray}
Now, we will choose a feasible rate for this linear program such that the gap from the upper bound can be bounded later. Setting to equality the inequalities \eqref{eq:op1}, \eqref{eq:op2}, \eqref{eq:op4} gives:
\begin{IEEEeqnarray}{lll}
t_1 =   \frac{C_{13}(C_{\text{MAC}}^\prime - C_{13})+C_{13}^\prime C_{02}}{C_{den}} \, , && \nonumber \\
t_2 =   \frac{C_{01}(C_{\text{MAC}}^\prime - C_{13}^\prime)+C_{13}^\prime C_{23}}{C_{den}}\, , && \label{soln1} \\
t_3 =   \frac{\Delta}{C_{den}}\, , && \nonumber \\
R_{\text{MAC-MDF}}^{1} =\frac{C_{01}(C_{02} - C_{13}) C_{\text{MAC}}^\prime  -C_{01} C_{13} C_{13}^\prime + C_{02}C_{13}^\prime C_{23}}{C_{den}},&& \nonumber
\end{IEEEeqnarray}
where the denominator $C_{den} = (C_{\text{MAC}}^\prime-C_{13}^\prime (C_{01}+C_{13})+C_{02}(C_{01}+C_{13}^\prime)+(C_{13}^\prime - C_{13})C_{23} $. Note that the equivalent expressions in \cite{BagMotKha14} are obtained as a special case of the above by setting $C_{13}^\prime=C_{13}$, $C_{23}^\prime=C_{23}$ and $C_{\text{MAC}}^\prime=C_{\text{MAC}}$. However, we are going to make a different choice here. 

The solution in (\ref{soln1}) would be feasible for the linear program if the inequality \eqref{eq:op3} is also satisfied. Inequality \eqref{eq:op3} is satisfied if
\[
t_1C_{01}+t_2C_{02} \le t_1(C_{01}+C_{23}) + t_3 C_{23}^\prime,
\]
i.e., if $t_2C_{02} \le t_1 C_{23} + t_3 C_{23}^\prime$. Substituting for $t_1$, $t_2$, and $t_3$ from the solution above, and simplifying, we can rewrite this condition as
\[
\Delta[C_{13}^\prime + C_{23}^\prime - C_{\text{MAC}}^\prime] \ge C_{01}C_{02}(C_{13} - C_{13}^\prime).
\]
Note that $C_{13}^\prime + C_{23}^\prime - C_{\text{MAC}}^\prime \ge 0$. Therefore, the required condition is satisfied for $\Delta > 0$ if we choose $C_{13}^\prime=C_{13}$, i.e., ${\mathbf Q}_1 = {\mathbf K}_{13}$. If we also choose $C_{23}^\prime=C_{23}$, i.e., ${\mathbf Q}_2 = {\mathbf K}_{23}$, we will not be able to bound $C_{123} - C'_{\text{MAC}}$ later. Therefore, we will make a different choice for ${\mathbf Q}_2$. We choose ${\mathbf Q}_2$ to be the covariance matrix obtained using the water-filling algorithm treating the noise covariance matrix to be $I+{\mathbf H}_{13} {\mathbf K}_{13} {\mathbf H}_{13}^T$. This is the solution at the end of the first iteration of the iterative waterfilling algorithm in \cite{YuRheBoyCio04}. Thus, we choose ${\mathbf Q}_2$ to be ${\mathbf K}'_{23}$ given by:
\[
{\mathbf K}_{23}^\prime = \mbox{arg} \max\limits_{{\mathbf Q}\succeq 0, \tr({\mathbf Q}) \leq 1}\frac{1}{2}\log\mbox{det}({\mathbf I}+{\mathbf H}_{13} {\mathbf K}_{13} {\mathbf H}_{13}^T+ {\mathbf H}_{23} {\mathbf Q} {\mathbf H}_{23}^T),
\]
resulting in
\begin{IEEEeqnarray}{lll}
C^\prime_{23}= \frac{1}{2}\log\mbox{det}\big({\mathbf I} + {\mathbf H}_{23}{\mathbf K}_{23}^\prime {\mathbf H}_{23}^T\big),&&
\nonumber \\
C_{\text{MAC}}^\prime=C'_{\text{MAC1}} = \frac{1}{2}\log\mbox{det}\big({\mathbf I} + {\mathbf H}_{13} {\mathbf K}_{13} {\mathbf H}_{13}^T+{\mathbf H}_{23} {\mathbf K}_{23}^\prime {\mathbf H}_{23}^T\big)&&.
\nonumber 
\end{IEEEeqnarray}
Thus, for $\Gamma' \le 0$, we choose the pentagon $\mathcal{C}_{\text{MAC}}({\mathbf K}_{13}, {\mathbf K}'_{23})$ given by:
\[
\{ (R_1,R_2):R_1 \leq  C_{13},R_2 \leq  C^\prime_{23},R_1+R_2 \leq  C'_{\text{MAC1}} \}. \]
Similarly, for $\Gamma' > 0$, we set to equality the inequalities \eqref{eq:op1}, \eqref{eq:op3}, \eqref{eq:op4} to get:
\[
R_{\text{MAC-MDF}}^{2} =   \frac{C_{02}(C_{01} - C_{23}) C_{\text{MAC}}^\prime  -C_{02} C_{23} C_{23}^\prime + C_{01}C_{13} C_{23}^\prime}{C_{den2}}, 
\]
where $C_{den2} = (C_{\text{MAC}}^\prime-C_{23}^\prime)(C_{02}+C_{23})+C_{01}(C_{02}+C_{23}^\prime)+(C_{23}^\prime - C_{23})C_{13}$. Then, we choose $C_{23}^\prime=C_{23}$, i.e., ${\mathbf Q}_2 = {\mathbf K}_{23}$ and 
${\mathbf Q}_1$ to be ${\mathbf K}'_{13}$ given by:
\[
{\mathbf K}_{13}^\prime = \mbox{arg} \max\limits_{{\mathbf Q}\succeq 0, \tr({\mathbf Q}) \leq 1}\frac{1}{2}\log\mbox{det}({\mathbf I}+{\mathbf H}_{13} {\mathbf Q} {\mathbf H}_{13}^T+ {\mathbf H}_{23} {\mathbf K}_{23} {\mathbf H}_{23}^T),
\]
resulting in
\begin{IEEEeqnarray}{lll}
C^\prime_{13}= \frac{1}{2}\log\mbox{det}\big({\mathbf I} + {\mathbf H}_{13}{\mathbf K}_{13}^\prime {\mathbf H}_{13}^T\big),&&
\nonumber \\
C_{\text{MAC}}^\prime=C'_{\text{MAC2}} = \frac{1}{2}\log\mbox{det}\big({\mathbf I} + {\mathbf H}_{13} {\mathbf K}'_{13} {\mathbf H}_{13}^T+{\mathbf H}_{23} {\mathbf K}_{23} {\mathbf H}_{23}^T\big)&&.
\nonumber 
\end{IEEEeqnarray}
Thus, for $\Gamma' > 0$, we choose the pentagon $\mathcal{C}_{\text{MAC}}({\mathbf K}'_{13}, {\mathbf K}_{23})$ given by:
\[
\{ (R_1,R_2):R_1 \leq  C'_{13},R_2 \leq  C_{23},R_1+R_2 \leq  C'_{\text{MAC2}} \}. \]

It is worth noting that the gap between the sum rate achieved after one iteration of iterative waterfilling $C'_{\text{MAC1}}$ (or $C'_{\text{MAC2}}$) and the sum capacity of the MIMO MAC can be bounded by a finite constant \cite{YuRheBoyCio04}.

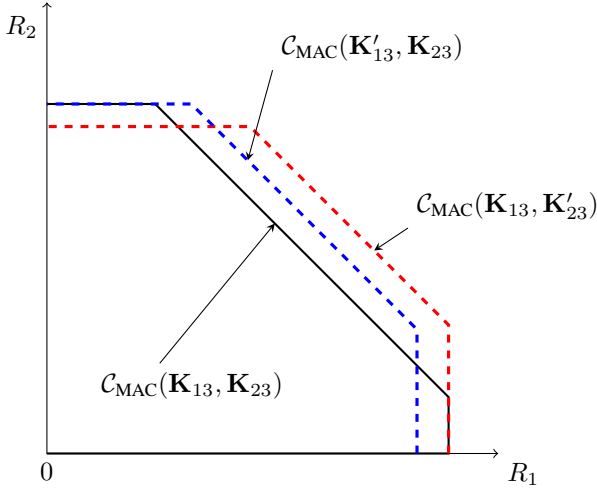
\begin{figure}[ht]
\begin{tikzpicture}[scale=0.3]

\draw [<->] (0,20) -- (0,0) -- (20,0);

\node[below right] at (20,0) {$R_1$};
\node[above left] at (0,18) {$R_2$};
\node[below] at (0,0) {$0$};

\node[above right] at (2,2) {${\cal C}_{\text{MAC}}({\mathbf K}_{13}, {\mathbf K}_{23})$};
\draw[->,>=stealth,black] (5,4) -- (10.1,10.2);

\draw[black, thick] (0,0)--(17.8,0)--(17.8,2.5)--(4.8,15.5)--(0,15.5);
\draw[red, very thick, dashed] (17.8,0)--(17.8,5.7)--(9,14.5)--(0,14.5);
\draw[blue, very thick, dashed] (16.4,0)--(16.4,5.5)--(6.4,15.5)--(0,15.5);

\node[above right] at (10,17) {${\cal C}_{\text{MAC}}({\mathbf K}'_{13}, {\mathbf K}_{23})$};
\draw[->,>=stealth,black] (10,17) -- (8.9,13);

\node[above right] at (16,10) {${\cal C}_{\text{MAC}}({\mathbf K}_{13}, {\mathbf K}'_{23})$};
\draw[->,>=stealth,black] (16,10) -- (14.5,9);

\end{tikzpicture}
\caption{An illustration of the rate regions for different choices of ${\mathbf Q}_1$,  ${\mathbf Q}_2$}
\label{macregions}
\end{figure}
Fig. \ref{macregions} gives an illustration of the rate regions corresponding to the chosen covariance matrices. Three regions are shown: ${\cal C}_{\text{MAC}}({\mathbf K}_{13}, {\mathbf K}_{23})$, ${\cal C}_{\text{MAC}}({\mathbf K}'_{13}, {\mathbf K}_{23})$ (our choice for $\Gamma' > 0$), and ${\cal C}_{\text{MAC}}({\mathbf K}_{13}, {\mathbf K}'_{23})$ (our choice for $\Gamma' \le 0$).

Now, we can show the difference between the achievable rate and the upper bound to be:
\begin{align}
&\kappa_{\text{MAC}}^1 \triangleq  R_{up}^1 - R_{\text{MAC-MDF}}^{1} \nonumber \\
& = \frac{C_{02}(C_{123} - C'_{\text{MAC1}}) \Delta}{(C_{01}+C_{13})(C'_{\text{MAC1}} - C_{13} + C_{02})(C_{123} - C_{13} + C_{02})} + \delta, 
\label{kappa1}
\end{align}
or
\begin{align}
&\kappa_{\text{MAC}}^2 \triangleq  R_{up}^2 - R_{\text{MAC-MDF}}^{2} \nonumber \\
& = \frac{C_{01}(C_{123} - C'_{\text{MAC2}}) \Delta}{(C_{02}+C_{23})(C'_{\text{MAC2}} - C_{23} + C_{01})(C_{123} - C_{23} + C_{01})} + \delta, 
\label{kappa2}
\end{align}
depending on whether $C'_{\text{MAC}} = C'_{\text{MAC1}}$ or $C'_{\text{MAC2}}$. This gap can be bounded if we can bound $C_{123} - C'_{\text{MAC}}$ and $\delta$. We will do this in the next two subsections.

\subsection{Bounding $C_{123} - C'_{\text{MAC}}$ (Step 5)}

Let $C_{sum}$ be the sum-rate capacity of MAC channel between the relays and the destination. From theorem 3 in  \cite{YuRheBoyCio04}, we have the following result on the convergence of the iterative waterfilling algorithm after the first iteration.
\begin{IEEEeqnarray}{rll}
C_{sum}-C'_{\text{MAC1}}&  \leq   \frac{n}{2}, \nonumber \\
C_{sum}-C'_{\text{MAC2}}&  \leq   \frac{n}{2}.  \nonumber
\end{IEEEeqnarray}
Equivalently, we have 
\begin{equation}
C_{sum}-C'_{\text{MAC}} \leq \frac{n}{2}.
\label{eqmac}
\end{equation}
\begin{lemma}
\label{lemmadiff}
$C_{123}-C'_{\text{MAC}}<\frac{n}{2}\log 4n$.
\end{lemma}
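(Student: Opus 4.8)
The plan is to decompose the gap as
\[
C_{123}-C'_{\text{MAC}}=(C_{123}-C_{sum})+(C_{sum}-C'_{\text{MAC}})
\]
and bound the two pieces separately. The second piece is at most $n/2$ by \eqref{eqmac}, so the whole problem reduces to bounding the \emph{cooperation gap} $C_{123}-C_{sum}$: the loss incurred by forcing the two relays to transmit independently with per-relay power $1$ (a block-diagonal covariance whose diagonal blocks have trace $\le 1$) instead of using the arbitrary $2n\times 2n$ covariance of trace $\le 2$ that is allowed in the cut $C_{123}$.

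To bound the cooperation gap I would first discard the inter-relay correlation from the optimizer $\mathbf{K}_{123}$. Partition it into $n\times n$ blocks, with diagonal blocks $\mathbf{A},\mathbf{C}$ and off-diagonal block $\mathbf{B}$. Conjugation by $\mathbf{J}=\mathrm{diag}(\mathbf{I},-\mathbf{I})$ preserves positive semidefiniteness and only flips the sign of $\mathbf{B}$, so $\mathbf{K}_{123}+\mathbf{J}\mathbf{K}_{123}\mathbf{J}^T=2\,\mathrm{diag}(\mathbf{A},\mathbf{C})$ gives $\mathbf{K}_{123}\preceq 2\,\mathrm{diag}(\mathbf{A},\mathbf{C})$. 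Since $\mathbf{H}_{123}=[\mathbf{H}_{13}\ \mathbf{H}_{23}]$, monotonicity of $\log\det$ yields
\[
C_{123}\le \tfrac12\log\det\!\left(\mathbf{I}+2\mathbf{H}_{13}\mathbf{A}\mathbf{H}_{13}^T+2\mathbf{H}_{23}\mathbf{C}\mathbf{H}_{23}^T\right).
\]
Because $\tr\mathbf{A}+\tr\mathbf{C}=\tr\mathbf{K}_{123}\le 2$, the pair $(\mathbf{A}/2,\mathbf{C}/2)$ is always MAC-feasible (each trace $\le 1$), giving $C_{sum}\ge \tfrac12\log\det(\mathbf{I}+\tfrac12\mathbf{H}_{13}\mathbf{A}\mathbf{H}_{13}^T+\tfrac12\mathbf{H}_{23}\mathbf{C}\mathbf{H}_{23}^T)$. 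Subtracting and diagonalizing the common matrix $\mathbf{M}=\mathbf{H}_{13}\mathbf{A}\mathbf{H}_{13}^T+\mathbf{H}_{23}\mathbf{C}\mathbf{H}_{23}^T\succeq 0$, the gap collapses to $\tfrac12\sum_i\log\frac{1+2\mu_i}{1+\mu_i/2}$ over the eigenvalues $\mu_i\ge 0$ of $\mathbf{M}$, and $\frac{1+2\mu}{1+\mu/2}\le 4$ bounds this by $\tfrac n2\log 4$.

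Combining the two pieces gives $C_{123}-C'_{\text{MAC}}\le \tfrac n2\log 4+\tfrac n2=\tfrac n2\log 8$, which lies below $\tfrac n2\log 4n$ once $n\ge 3$; for $n=1,2$ I would instead invoke the sharper per-mode estimate $\frac{1+2\mu}{1+\mu}\le 2$, which is available whenever both relay traces are already $\le 1$ so that $(\mathbf{A},\mathbf{C})$ is itself MAC-feasible, the strict scalar inequalities closing the boundary cases. The step I expect to be the main obstacle is exactly this cooperation-gap bound: the congruence domination $\mathbf{K}_{123}\preceq 2\,\mathrm{diag}(\mathbf{A},\mathbf{C})$ is wasteful precisely when the cross-block $\mathbf{B}$ is small (for instance when one relay is nearly silent, where it costs a spurious factor of $2$), so pinning the final constant to $4n$ requires trading the congruence step against the per-relay trace constraints, keeping the determinant ratio near $2$ rather than $4$ per mode in the unbalanced regime. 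Controlling that interplay between the power-pooling factor and the separate trace-$1$ constraints, uniformly in $n$, is where the real work lies; the remaining ingredients are just monotonicity of $\log\det$ and the scalar inequalities above.
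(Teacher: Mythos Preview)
Your decomposition $C_{123}-C'_{\text{MAC}}=(C_{123}-C_{sum})+(C_{sum}-C'_{\text{MAC}})$ and your use of \eqref{eqmac} for the second piece are exactly what the paper does. The treatment of the cooperation gap $C_{123}-C_{sum}$, however, is genuinely different. The paper simply bounds $\mathbf{K}_{123}\preceq 2\mathbf{I}$ to get $C_{123}\le\tfrac12\log\det(\mathbf{I}+2\mathbf{A})$ with $\mathbf{A}=\mathbf{H}_{123}\mathbf{H}_{123}^T$, and lower-bounds $C_{sum}$ via the scaled-identity inputs $\mathbf{Q}_1=\mathbf{Q}_2=\tfrac1n\mathbf{I}$, obtaining $C_{sum}\ge\tfrac12\log\det(\mathbf{I}+\tfrac1n\mathbf{A})$; the eigenvalue ratio $\frac{1+2a_i}{1+a_i/n}\le 2n$ then gives $C_{123}-C_{sum}\le\tfrac n2\log 2n$ uniformly in $n$. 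Your congruence argument $\mathbf{K}_{123}\preceq 2\,\mathrm{diag}(\mathbf{A},\mathbf{C})$ together with the feasible MAC input $(\mathbf{A}/2,\mathbf{C}/2)$ is sharper: it yields $C_{123}-C_{sum}\le\tfrac n2\log 4$, hence $C_{123}-C'_{\text{MAC}}<\tfrac n2\log 8$, which is \emph{strictly better} than the lemma's $\tfrac n2\log 4n$ for all $n\ge 2$ (your threshold $n\ge 3$ is too conservative; $n=2$ already gives equality of the target constants, and your per-mode bound is strict). So for $n\ge 2$ you have in fact proved more than the lemma asks.

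The only genuine gap is $n=1$. There your bound $\tfrac12\log 8$ exceeds the claimed $\tfrac12\log 4$, and your proposed patch does not close it: the diagonal blocks of the \emph{optimal} $\mathbf{K}_{123}$ need not each have trace $\le 1$. In the scalar case $\mathbf{K}_{123}=\frac{2}{h_{13}^2+h_{23}^2}[h_{13},h_{23}]^T[h_{13},h_{23}]$, so the first diagonal entry equals $2h_{13}^2/(h_{13}^2+h_{23}^2)$, which exceeds $1$ whenever $|h_{13}|>|h_{23}|$; thus ``both relay traces $\le 1$'' is not a boundary case you can dismiss. The simple fix is to abandon the block-diagonal argument for $n=1$ and instead use the paper's scaled-identity choice $\mathbf{Q}_1=\mathbf{Q}_2=\tfrac1n\mathbf{I}$ (which for $n=1$ is just full power at both relays), giving $C_{123}-C_{sum}\le\tfrac12\log\frac{1+2s}{1+s}\le\tfrac12\log 2$ with $s=h_{13}^2+h_{23}^2$, and hence the required $\tfrac12\log 4$.
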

\begin{proof}
\begin{IEEEeqnarray}{rll}
C_{123}&=&\max\limits_{{\mathbf Q}\succeq 0, \tr({\mathbf Q}) \leq 2}\frac{1}{2}\log\mbox{det}({\mathbf I} + {\mathbf H}_{123} {\mathbf Q} {\mathbf H}_{123}^T) \nonumber\\
&\stackrel{(a)}{\leq} & \frac{1}{2} \log\mbox{det}({\mathbf I} + {\mathbf H}_{123}(2{\mathbf I}) {\mathbf H}_{123}^T) \nonumber\\
&=& \frac{1}{2} \log\mbox{det}({\mathbf I}+2 {\mathbf A}),
\label{c123bound}
\end{IEEEeqnarray}
where (a) is true because $2{\mathbf I} \succeq {\mathbf Q}$ and $\log \det(\cdot)$ is increasing on the cone of positive-definite Hermitian matrices, and ${\mathbf A} \triangleq {\mathbf H}_{123}{\mathbf H}_{123}^T$. Now, note that $C_{sum}$ is at least as much as the sum rate achieved by ${\mathbf Q}_1 = {\mathbf Q}_2 = \frac{1}{n}{\mathbf I}$, i.e., we have
\begin{IEEEeqnarray}{rll}
C_{sum}&\geq &\frac{1}{2}\log\mbox{det}\big({\mathbf I}+\frac{1}{n}{\mathbf H}_{13} {\mathbf H}_{13}^T+\frac{1}{n} {\mathbf H}_{23} {\mathbf H}_{23}^T\big) 
\nonumber \\
 &  = & \frac{1}{2}\log\mbox{det}\big({\mathbf I}+\frac{1}{n}{\mathbf H}_{123} {\mathbf H}_{123}^T\big) 
\nonumber \\
 &  = & \frac{1}{2}\log\mbox{det}\big({\mathbf I}+\frac{1}{n}{\mathbf A}\big).
\label{csumbound} 
\end{IEEEeqnarray}
Therefore, from (\ref{c123bound}) and (\ref{csumbound}), we have
\begin{IEEEeqnarray}{rll}
C_{123}-C_{sum}&\leq & \frac{1}{2} \log\mbox{det}({\mathbf I}+2{\mathbf A}) - \frac{1}{2}\log\mbox{det}\big({\mathbf I}+\frac{1}{n}{\mathbf A}\big) \nonumber\\
& =  & \frac{1}{2} \log\frac{\mbox{det}({\mathbf I}+2{\mathbf A})}{\mbox{det}({\mathbf I}+\frac{1}{n}{\mathbf A})}
\nonumber \\
& \stackrel{(a)}{=}& \frac{1}{2} \log \left( \prod_{i=1}^n \frac{1+2a_i}{1+\frac{a_i}{n}}\right) \nonumber \\
& \stackrel{(b)}{\leq}& \frac{1}{2} \log \left( \prod_{i=1}^n 2n \right) \nonumber\\
& =& \frac{n}{2}\log 2n,
\label{equnif}
\end{IEEEeqnarray}
where $a_1, a_2, \ldots, a_n$ are the eigen values of ${\mathbf A}$ in (a), and (b) is obtained by using Lemma \ref{prop1} in Appendix for the $y=0$ case.\\
Finally, from \eqref{eqmac} and \eqref{equnif}, the bound in the lemma is obtained.
\end{proof}

\subsection{Bounding $\delta$ (Step 5)}

\begin{lemma}
\label{lemmadelta}
$\delta \leq \frac{n}{2} \log 2n$.
\end{lemma}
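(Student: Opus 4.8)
The plan is to derive the bound on $\delta$ directly from the sum-rate estimate already established inside the proof of Lemma~\ref{lemmadiff}. Recall that $\delta=\max(C_{123}-(C_{13}+C_{23}),0)$. Since the claimed bound $\frac{n}{2}\log 2n$ is nonnegative, the case $C_{123}\le C_{13}+C_{23}$ gives $\delta=0$ and is trivial; so I would assume $C_{123}>C_{13}+C_{23}$ and concentrate on bounding the quantity $C_{123}-(C_{13}+C_{23})$.

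The pivotal step is the inequality $C_{13}+C_{23}\ge C_{sum}$, i.e.\ the MIMO-MAC sum-rate capacity never exceeds the sum of the two individual single-user link capacities. To see this, take covariance matrices ${\mathbf Q}_1,{\mathbf Q}_2$ (with $\tr({\mathbf Q}_i)\le 1$) achieving $C_{sum}$ and write ${\mathbf B}={\mathbf H}_{13}{\mathbf Q}_1{\mathbf H}_{13}^T$ and ${\mathbf C}={\mathbf H}_{23}{\mathbf Q}_2{\mathbf H}_{23}^T$, both positive semidefinite. Then $C_{sum}=\frac{1}{2}\log\det({\mathbf I}+{\mathbf B}+{\mathbf C})$, and I would invoke the determinant subadditivity inequality $\det({\mathbf I}+{\mathbf B}+{\mathbf C})\le\det({\mathbf I}+{\mathbf B})\det({\mathbf I}+{\mathbf C})$ to split this into $\frac{1}{2}\log\det({\mathbf I}+{\mathbf B})+\frac{1}{2}\log\det({\mathbf I}+{\mathbf C})$, each summand being at most $C_{13}$ and $C_{23}$ respectively by the definitions of the single-user capacities and the power constraints $\tr({\mathbf Q}_i)\le 1$. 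The subadditivity inequality itself follows from the congruence factorization ${\mathbf I}+{\mathbf B}+{\mathbf C}=({\mathbf I}+{\mathbf B})^{1/2}\big({\mathbf I}+({\mathbf I}+{\mathbf B})^{-1/2}{\mathbf C}({\mathbf I}+{\mathbf B})^{-1/2}\big)({\mathbf I}+{\mathbf B})^{1/2}$ together with the fact that $({\mathbf I}+{\mathbf B})^{-1/2}$ has spectral norm at most one, so that the eigenvalues of the inner matrix are dominated by those of ${\mathbf C}$ (Ostrowski's eigenvalue-monotonicity theorem).

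With $C_{13}+C_{23}\ge C_{sum}$ in hand, I would simply chain inequalities,
\begin{IEEEeqnarray}{rll}
C_{123}-(C_{13}+C_{23}) &\le& C_{123}-C_{sum} \nonumber\\
&\le& \frac{n}{2}\log 2n, \nonumber
\end{IEEEeqnarray}
where the last step is exactly the estimate \eqref{equnif} obtained while proving Lemma~\ref{lemmadiff}. Since $\delta$ equals either $0$ or $C_{123}-(C_{13}+C_{23})$, this yields $\delta\le\frac{n}{2}\log 2n$. The only genuinely non-routine ingredient is the MAC sum-rate bound $C_{13}+C_{23}\ge C_{sum}$ (equivalently, the determinant subadditivity inequality); everything else is bookkeeping that reuses the computation already carried out for Lemma~\ref{lemmadiff}. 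An alternative that sidesteps $C_{sum}$ entirely is to lower-bound $C_{13}$ and $C_{23}$ by their uniform-power values $\frac{1}{2}\log\det({\mathbf I}+\frac{1}{n}{\mathbf H}_{i3}{\mathbf H}_{i3}^T)$, combine them through the same subadditivity inequality into $\frac{1}{2}\log\det({\mathbf I}+\frac{1}{n}{\mathbf A})$, upper-bound $C_{123}$ by $\frac{1}{2}\log\det({\mathbf I}+2{\mathbf A})$ as in \eqref{c123bound}, and then apply Lemma~\ref{prop1} (the $y=0$ case) to finish the estimate identically.
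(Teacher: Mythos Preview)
Your argument is correct, and it is genuinely different from the paper's. The paper bounds $\delta$ directly: it upper-bounds $C_{123}$ by $\frac{1}{2}\log\det({\mathbf I}+2{\mathbf A}+2{\mathbf B})$ (with ${\mathbf A}={\mathbf H}_{13}{\mathbf H}_{13}^T$, ${\mathbf B}={\mathbf H}_{23}{\mathbf H}_{23}^T$), lower-bounds $C_{13}+C_{23}$ by the uniform-power values, and then controls the ratio $\det({\mathbf I}+2{\mathbf A}+2{\mathbf B})/[\det({\mathbf I}+\frac{1}{n}{\mathbf A})\det({\mathbf I}+\frac{1}{n}{\mathbf B})]$ using Fiedler's eigenvalue inequality (Lemma~\ref{fiedler}) together with the full two-variable version of Lemma~\ref{prop1}. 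You instead insert the intermediate quantity $C_{sum}$: the determinant subadditivity $\det({\mathbf I}+{\mathbf B}+{\mathbf C})\le\det({\mathbf I}+{\mathbf B})\det({\mathbf I}+{\mathbf C})$ gives $C_{sum}\le C_{13}+C_{23}$, after which the already-proved estimate \eqref{equnif} from Lemma~\ref{lemmadiff} finishes the job. Your route is more economical---it avoids Fiedler's inequality entirely, needs only the $y=0$ case of Lemma~\ref{prop1}, and recycles the computation from Lemma~\ref{lemmadiff} rather than repeating a parallel one. The paper's route is self-contained for this lemma (it does not invoke $C_{sum}$ or the previous proof), at the cost of the extra eigenvalue machinery. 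Your ``alternative'' at the end is essentially a hybrid: it follows the paper's upper/lower bounding of $C_{123}$ and $C_{13}+C_{23}$, but replaces the Fiedler step with the same subadditivity inequality, again landing on the $y=0$ computation from \eqref{equnif}.
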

\begin{proof}
Let $C_{123} > C_{13}+C_{23}$  (otherwise, $\delta = 0$) . Then 
\begin{IEEEeqnarray}{rll}
\delta & =   & C_{123} -( C_{13}+C_{23})
\nonumber \\
 & =  & \max\limits_{{\mathbf Q}\succeq 0, \tr({\mathbf Q}) \leq 2}\frac{1}{2}\log\mbox{det}({\mathbf I}+ {\mathbf H}_{123}{\mathbf Q}{\mathbf H}_{123}^T) \nonumber \\
 &&- \max\limits_{{\mathbf Q}\succeq 0, \tr({\mathbf Q}) \leq 1}\frac{1}{2}\log\mbox{det}({\mathbf I}+{\mathbf H}_{13}{\mathbf Q}{\mathbf H}_{13}^T)
 \nonumber \\
 &&- \max\limits_{{\mathbf Q}\succeq 0, \tr({\mathbf Q}) \leq 1}\frac{1}{2}\log\mbox{det}({\mathbf I}+ {\mathbf H}_{23}{\mathbf Q}{\mathbf H}_{23}^T) \nonumber \\
 & \leq & \frac{1}{2}\log\mbox{det}({\mathbf I}+2{\mathbf H}_{123}{\mathbf H}_{123}^T) \nonumber \\
&&-\frac{1}{2}\log\mbox{det}({\mathbf I}+\frac{1}{n}{\mathbf H}_{13}{\mathbf H}_{13}^T) \nonumber \\
&&- \frac{1}{2}\log\mbox{det}({\mathbf I}+\frac{1}{n}{\mathbf H}_{23}{\mathbf H}_{23}^T) \nonumber
\end{IEEEeqnarray}
Let ${\mathbf A}={\mathbf H}_{13}{\mathbf H}_{13}^T$ and ${\mathbf B}={\mathbf H}_{23}{\mathbf H}_{23}^T$. Since ${\mathbf H}_{123}=[{\mathbf H}_{13} \; {\mathbf H}_{23}]$, $\delta$ can be upper bounded as
\begin{equation}
\delta \leq  \frac{1}{2}\log  \frac{\mbox{det}({\mathbf I}+2{\mathbf A}+2{\mathbf B})}{\mbox{det}({\mathbf I}+\frac{1}{n}{\mathbf A}) \, \mbox{det}({\mathbf I}+\frac{1}{n}{\mathbf B})}.
\label{eq:delta1}
\end{equation}
Consider the term inside the log. We can upper bound it as follows using Lemmas \ref{fiedler} and \ref{prop1} in the Appendix. Let $\alpha_1 \geq \alpha_2 \geq \ldots \geq \alpha_n$ be the eigen values of ${\mathbf A}$ and $\beta_1 \geq \beta_2 \geq \ldots \geq \beta_n$ be the eigen values of ${\mathbf B}$.
\begin{IEEEeqnarray}{rll}
 \frac{\mbox{det}({\mathbf I}+2{\mathbf A}+2{\mathbf B})}{\mbox{det}({\mathbf I}+\frac{1}{n}{\mathbf A}) \, \mbox{det}({\mathbf I}+\frac{1}{n}{\mathbf B})} & = & \frac{\mbox{det}({\mathbf I}+2{\mathbf A}+2{\mathbf B})}{\prod_{i=1}^n \left(1+\frac{\alpha_i}{n}\right)\left(1+\frac{\beta_i}{n} \right)} \nonumber\\
& \stackrel{(a)}{\leq}& \frac{\prod_{i=1}^n (1+ 2 \alpha_i + 2 \beta_{n+1-i})}{\prod_{i=1}^n \left(1+\frac{\alpha_i}{n}\right)\left(1+\frac{\beta_i}{n} \right)} \nonumber\\
& = & \frac{\prod_{i=1}^n (1+ 2 \alpha_i + 2 \beta_{n+1-i})}{\prod_{i=1}^n \left(1+\frac{\alpha_i}{n}\right)\left(1+\frac{1}{n} \beta_{n+1-i} \right)} \nonumber\\
& = & \prod_{i=1}^n \frac{ (1+ 2 \alpha_i + 2 \beta_{n+1-i})}{\left(1+\frac{\alpha_i}{n}\right)\left(1+\frac{1}{n} \beta_{n+1-i} \right)} \nonumber\\
& \stackrel{(b)}{\leq} & \prod_{i=1}^n 2n  = (2n)^n,
\label{eq:delta2}
\end{IEEEeqnarray}
where (a) is obtained using Lemma \ref{fiedler}, and (b) is obtained using Lemma  \ref{prop1}.

Finally, using \eqref{eq:delta2} in \eqref{eq:delta1}, the required bound in the lemma is obtained.
\end{proof}

\subsection{Main Result: MDF-MAC achieves rates within constant gap of capacity for $\Delta > 0$}
\begin{theorem}
For the 2-relay MIMO Gaussian diamond channel with $n$ antennas per node, the multi-hopping decode-and-forward protocol MDF-MAC achieves rates within $n \log_2 (\sqrt{8}n)$ bits of capacity for $\Delta > 0$.
\end{theorem}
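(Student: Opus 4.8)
The plan is to combine the two gap expressions already computed in \eqref{kappa1} and \eqref{kappa2} with the bounds from Lemmas \ref{lemmadiff} and \ref{lemmadelta}. I will treat the case $\Gamma' \le 0$, where the relevant gap is $\kappa_{\text{MAC}}^1 = R_{up}^1 - R_{\text{MAC-MDF}}^{1}$ given in \eqref{kappa1}; the case $\Gamma' > 0$ is handled identically after interchanging the roles of the two relays. Writing \eqref{kappa1} as $\kappa_{\text{MAC}}^1 = (C_{123} - C'_{\text{MAC1}})\,\rho + \delta$ with
\[
\rho = \frac{C_{02}\,\Delta}{(C_{01}+C_{13})(C'_{\text{MAC1}} - C_{13} + C_{02})(C_{123} - C_{13} + C_{02})},
\]
the whole statement reduces to showing $\rho \le 1$: then $\kappa_{\text{MAC}}^1 \le (C_{123} - C'_{\text{MAC1}}) + \delta$, and the two lemmas finish the estimate.

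The crux is therefore the bound $\rho \le 1$. For the numerator I would use the elementary inequality $\Delta = C_{01}C_{02} - C_{13}C_{23} \le C_{01}C_{02}$, giving $C_{02}\,\Delta \le C_{01}C_{02}^2$. For the denominator I would lower-bound each of its three factors by invoking monotonicity of $\log\det$ on the positive semidefinite cone: (i) $C_{01} + C_{13} \ge C_{01}$; (ii) since $C'_{\text{MAC1}}$ is obtained by adding the positive semidefinite term $\mathbf{H}_{23}\mathbf{K}'_{23}\mathbf{H}_{23}^T$ inside the determinant defining $C_{13}$, we have $C'_{\text{MAC1}} \ge C_{13}$ and hence $C'_{\text{MAC1}} - C_{13} + C_{02} \ge C_{02}$; and (iii) evaluating the maximization defining $C_{123}$ at $\mathbf{Q} = \mathrm{diag}(\mathbf{K}_{13}, \mathbf{0})$ shows $C_{123} \ge C_{13}$, so $C_{123} - C_{13} + C_{02} \ge C_{02}$. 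Multiplying these, the denominator is at least $C_{01}C_{02}^2$, which dominates the numerator, so $\rho \le 1$. I would also record that $C_{123} - C'_{\text{MAC1}} \ge 0$ (evaluate the same maximization at $\mathbf{Q} = \mathrm{diag}(\mathbf{K}_{13}, \mathbf{K}'_{23})$), which keeps the factored form meaningful and validates $\kappa_{\text{MAC}}^1 \le (C_{123}-C'_{\text{MAC1}}) + \delta$.

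It then remains to substitute the lemmas and compute. Using $C_{123} - C'_{\text{MAC1}} < \frac{n}{2}\log 4n$ from Lemma \ref{lemmadiff} and $\delta \le \frac{n}{2}\log 2n$ from Lemma \ref{lemmadelta}, I obtain
\[
\kappa_{\text{MAC}}^1 < \tfrac{n}{2}\log 4n + \tfrac{n}{2}\log 2n = \tfrac{n}{2}\log(8n^2) = n\log_2(\sqrt{8}\,n),
\]
with all logarithms base $2$. The bound for $\kappa_{\text{MAC}}^2$ is the same by symmetry, which establishes the theorem.

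The step I expect to be the main obstacle is the verification that $\rho \le 1$. While each individual estimate is routine, the argument hinges on correctly matching the three factors of the intimidating denominator against the single product $C_{01}C_{02}^2$ coming from $\Delta \le C_{01}C_{02}$, and on careful positivity bookkeeping --- positivity of $\Delta$ (the standing hypothesis $\Delta > 0$), of $C_{123} - C'_{\text{MAC1}}$, and of every denominator factor --- needed to chain the inequalities in the correct direction. Once $\rho \le 1$ is secured, the genuinely hard analytic work has already been absorbed into Lemmas \ref{lemmadiff} and \ref{lemmadelta}, and only the closing arithmetic remains.
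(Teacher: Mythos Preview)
Your proposal is correct and follows essentially the same approach as the paper. The paper's proof is extremely terse --- it only cites Lemmas~\ref{lemmadiff} and~\ref{lemmadelta} together with the facts $C'_{\text{MAC1}}\ge C_{13}$ and $C'_{\text{MAC2}}\ge C_{23}$ and jumps directly to the numerical bound --- so your explicit verification that the factor $\rho\le 1$ (via $\Delta\le C_{01}C_{02}$, $C_{01}+C_{13}\ge C_{01}$, $C'_{\text{MAC1}}-C_{13}+C_{02}\ge C_{02}$, $C_{123}-C_{13}+C_{02}\ge C_{02}$) simply spells out the steps the paper leaves implicit.
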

\begin{proof}
Using Lemma \ref{lemmadiff}, Lemma \ref{lemmadelta}, and the facts- $C'_{\text{MAC1}} \geq C_{13}$ and $C'_{\text{MAC2}} \geq C_{23}$, the gaps in (\ref{kappa1}) and (\ref{kappa2}) can be bounded as:
\[
\kappa_{\text{MAC}}^1 \le \frac{n}{2} \log 4n + \frac{n}{2} \log 2n = n \log(\sqrt{8}n),
\]
\[
\kappa_{\text{MAC}}^2 \le \frac{n}{2} \log 4n + \frac{n}{2} \log 2n = n \log(\sqrt{8}n),
\]
to get the required result.
\end{proof}

\section{Summary and future directions}
In this paper, we showed that for a half-duplex 2-relay MIMO Gaussian diamond channel a simple multihopping decode-and-forward protocol, MDF-MAC, achieves rates within a constant gap from capacity for $\Delta > 0$. The constant gap bound is $n \log_2 (\sqrt{8}n)$ bits, where $n$ is the number of antennas in each node. This is a generalization of the single-antenna diamond channel result in \cite{BagMotKha14}. We also identify the transmit covariance matrices to be used by each relay in the multiple-access (MAC) state of the MDF-MAC protocol. 

For $\Delta < 0$, the MDF-BC protocol achieves rates within a constant gap from capacity for the single-antenna case \cite{BagMotKha14}. It is not yet known if this result also extends to the multi-antenna setting. A constant gap bound linear in $n$ has been obtained recently in \cite{CarTunKno15}, but requires noisy network coding to be used instead of the simple decode-and-forward approach used in our work.


\section*{Appendix}
\begin{lemma}
Let $f(x,y)=\frac{1+2x+2y}{\left( 1+\frac{x}{n} \right) \left( 1+\frac{y}{n}\right)}$. Then, for $n \geq 1,$ 
\begin{IEEEeqnarray}{rll}
\sup_{x \geq 0, y \geq 0}f(x,y) &=& 2n. \nonumber
\end{IEEEeqnarray}
\label{prop1}
\end{lemma}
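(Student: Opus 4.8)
The plan is to guess that the supremum equals $2n$ and then establish this in two moves: first prove the strict bound $f(x,y) < 2n$ for every $(x,y)$ in the closed first quadrant, and second exhibit feasible points along which $f$ tends to $2n$. Since $2n$ is then a strict upper bound that is approached arbitrarily closely, it must be the supremum (and, incidentally, it is not attained).

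For the upper bound, I would clear the denominator. Because $(1+\frac{x}{n})(1+\frac{y}{n}) > 0$ for $x,y \ge 0$, the inequality $f(x,y) < 2n$ is equivalent to
\[
1 + 2x + 2y < 2n\Big(1+\tfrac{x}{n}\Big)\Big(1+\tfrac{y}{n}\Big).
\]
Expanding the right-hand side gives $2n + 2x + 2y + \frac{2xy}{n}$, so after cancelling the common terms $2x+2y$ the claim collapses to $1 < 2n + \frac{2xy}{n}$. For $n \ge 1$ and $x,y \ge 0$ the right-hand side is at least $2n \ge 2 > 1$, so the strict inequality holds uniformly on the quadrant. This is the entire content of the upper bound, and I expect it to be a one-line computation once the cross-multiplication is carried out. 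For tightness I would simply restrict to the boundary ray $y=0$ and send $x \to \infty$: there $f(x,0) = \frac{1+2x}{1+x/n} \to 2n$, which together with the strict bound forces $\sup f = 2n$.

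The one point I would flag, and the reason I would \emph{not} attack this purely as a calculus optimization, is that the supremum lives ``at infinity'' rather than at an interior critical point. Indeed, fixing $y$ and differentiating $f$ in $x$, the numerator of the derivative simplifies to $2 - \frac{1}{n} - \frac{2y}{n}$, which is independent of $x$; hence $f(\cdot,y)$ is strictly monotone in $x$ (increasing when $y < n-\tfrac12$, so in particular for $y=0$) and admits no interior maximum in the $x$-direction. Thus any attempt to locate the extremum by setting the gradient to zero finds only the single stationary point $(n-\tfrac12,\,n-\tfrac12)$, whose value is strictly below $2n$, and misses the true supremum entirely. This is precisely why the clean route is the two-step argument above: prove a strict global bound by algebra, then produce an explicit sequence of feasible points realizing it in the limit.
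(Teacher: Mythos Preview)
Your proposal is correct and follows essentially the same two-step route as the paper: show $2n - f(x,y) \ge 0$ on the quadrant by clearing the denominator, then exhibit $\lim_{x\to\infty} f(x,0) = 2n$. Your version is more explicit (you carry out the algebra and note the inequality is strict, so the supremum is not attained), and the calculus digression about the absence of interior maxima is a correct but inessential addendum.
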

\begin{proof}
It can be easily seen that for $n \geq 1, x \geq 0, y \geq 0$:
\begin {equation}
 2n - f(x,y)=2n - \frac{1+2x+2y}{\left( 1+\frac{x}{n}\right)\left( 1+\frac{y}{n}\right)}\geq 0. 
\label{eq:claim1}
 \end {equation}
Further, we also have
\begin {equation}
\lim_{x \rightarrow \infty} f(x,0) = 2n. 
\label{eq:claim2}
 \end {equation}
Therefore, using \eqref{eq:claim1} and \eqref{eq:claim2}, the required result is obtained.
\end{proof}

\begin{lemma}
[Theorem in \cite{Fie71}, (2)]
Let ${\mathbf C}$ and ${\mathbf D}$ be hermitian $n \times n$ matrices whose eigen values are $c_1 \geq c_2 \geq \ldots \geq c_n$ and $d_1 \geq d_2 \geq \ldots \geq d_n$, respectively. Then, if $c_n + d_n \ge 0$ (which is true if ${\mathbf C}$ and ${\mathbf D}$ are positive semidefinite), we have
\begin{IEEEeqnarray}{rll}
\det({\mathbf C} + {\mathbf D}) &\leq & \prod_{i=1}^n (c_i + d_{n+1-i}). \nonumber
\end{IEEEeqnarray}
\label{fiedler}
\end{lemma}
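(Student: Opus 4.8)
The plan is to recast the inequality as a maximization of $\det(\mathbf{C}+\mathbf{D})$ over the unitary orbit of one matrix, and then reduce it to a rearrangement inequality. First I would note that simultaneous conjugation $(\mathbf{C},\mathbf{D})\mapsto(\mathbf{U}\mathbf{C}\mathbf{U}^{*},\mathbf{U}\mathbf{D}\mathbf{U}^{*})$ by a unitary $\mathbf{U}$ preserves both spectra and the value $\det(\mathbf{C}+\mathbf{D})$, so without loss of generality $\mathbf{C}=\operatorname{diag}(c_{1},\dots,c_{n})$. Fixing this diagonal $\mathbf{C}$, the claim becomes: over all Hermitian $\mathbf{D}$ with the prescribed spectrum $d_{1}\ge\cdots\ge d_{n}$, i.e., over the compact orbit $\{\mathbf{V}\operatorname{diag}(d_{1},\dots,d_{n})\mathbf{V}^{*}:\mathbf{V}\text{ unitary}\}$, the maximum of $\det(\mathbf{C}+\mathbf{D})$ equals $\prod_{i}(c_{i}+d_{n+1-i})$.

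A single observation disposes of all sign issues: for every permutation $\sigma$ and every $i$ we have $c_{i}\ge c_{n}$ and $d_{\sigma(i)}\ge d_{n}$, hence $c_{i}+d_{\sigma(i)}\ge c_{n}+d_{n}\ge 0$. Thus every factor that can ever arise is nonnegative, and in particular the target product $\prod_{i}(c_{i}+d_{n+1-i})$ is itself nonnegative.

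The core of the argument is a first-order analysis on the orbit. Writing a motion as $\mathbf{D}(t)=e^{t\mathbf{X}}\mathbf{D}e^{-t\mathbf{X}}$ with $\mathbf{X}$ skew-Hermitian and assuming $\mathbf{C}+\mathbf{D}$ invertible, Jacobi's formula gives $\tfrac{d}{dt}\det(\mathbf{C}+\mathbf{D}(t))|_{t=0}=\det(\mathbf{C}+\mathbf{D})\,\tr\!\big((\mathbf{C}+\mathbf{D})^{-1}[\mathbf{X},\mathbf{D}]\big)$, which vanishes for all $\mathbf{X}$ exactly when $\mathbf{D}$ commutes with $(\mathbf{C}+\mathbf{D})^{-1}$, equivalently with $\mathbf{C}$. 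Since the orbit is compact the maximum is attained, and it is at least the value $\prod_{i}(c_{i}+d_{n+1-i})\ge 0$ realized by the reverse-diagonal configuration. At a maximizer, either $\det(\mathbf{C}+\mathbf{D})=0\le\prod_{i}(c_{i}+d_{n+1-i})$, or $\mathbf{C}+\mathbf{D}$ is invertible and stationarity forces $[\mathbf{C},\mathbf{D}]=0$, so $\mathbf{C}$ and $\mathbf{D}$ are simultaneously diagonal and $\det(\mathbf{C}+\mathbf{D})=\prod_{i}(c_{i}+d_{\sigma(i)})$ for some permutation $\sigma$. Because every factor is nonnegative, a standard exchange argument finishes the job: swapping any inverted pair $i<j$ with $d_{\sigma(i)}>d_{\sigma(j)}$ changes the affected two-factor block by $(c_{i}+d_{\sigma(j)})(c_{j}+d_{\sigma(i)})-(c_{i}+d_{\sigma(i)})(c_{j}+d_{\sigma(j)})=(c_{i}-c_{j})(d_{\sigma(i)}-d_{\sigma(j)})\ge 0$, so the reverse pairing $\sigma(i)=n+1-i$ maximizes $\prod_{i}(c_{i}+d_{\sigma(i)})$. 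Chaining the inequalities yields $\det(\mathbf{C}+\mathbf{D})\le\prod_{i}(c_{i}+d_{n+1-i})$ for the original $\mathbf{D}$.

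The step I expect to be the main obstacle is the stationarity analysis in its non-generic cases. When $\mathbf{C}$ has repeated eigenvalues, $[\mathbf{C},\mathbf{D}]=0$ only forces $\mathbf{D}$ to be block diagonal; there I would observe that $\mathbf{C}$ is a scalar on each block, so $\mathbf{D}$ can be diagonalized blockwise without altering $\mathbf{C}$, again reducing to a permutation configuration. The singular case must also be treated separately, which the bound $\det(\mathbf{C}+\mathbf{D})=0\le\prod_{i}(c_{i}+d_{n+1-i})$ handles. Since the statement is precisely Fiedler's theorem, one may alternatively just invoke \cite{Fie71}; the orbit-optimization route above is what I would use to make it self-contained.
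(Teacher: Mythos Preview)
The paper does not actually prove this lemma: it is stated in the Appendix as a direct quotation of Fiedler's theorem from \cite{Fie71} and used as a black box in the proof of Lemma~\ref{lemmadelta}. Your proposal therefore goes well beyond what the paper does, supplying a complete self-contained argument.

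Your orbit-optimization route is correct. Diagonalizing $\mathbf{C}$, maximizing $\det(\mathbf{C}+\mathbf{D})$ over the compact unitary orbit of $\mathbf{D}$, and applying Jacobi's formula at a nonsingular maximizer indeed forces $[\mathbf{D},(\mathbf{C}+\mathbf{D})^{-1}]=0$ (the commutator of two Hermitian matrices is skew-Hermitian, so the trace condition against all skew-Hermitian $\mathbf{X}$ kills it), hence $[\mathbf{C},\mathbf{D}]=0$. Your treatment of the singular-maximizer case via the nonnegativity of the target product, and of repeated eigenvalues of $\mathbf{C}$ via blockwise diagonalization of $\mathbf{D}$, closes the remaining gaps. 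The exchange computation $(c_i-c_j)(d_{\sigma(i)}-d_{\sigma(j)})\ge 0$ is exactly the right rearrangement step, and the nonnegativity of all factors (from $c_n+d_n\ge 0$) is what lets you propagate the two-factor increase to the full product. For this paper a citation is all that is needed, but your argument is a clean way to make the appendix self-contained and also clarifies \emph{why} the anti-sorted pairing $c_i\leftrightarrow d_{n+1-i}$ is the extremal one.
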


\bibliographystyle{IEEEtran}
\bibliography{references}
\end{document}